\documentclass[11pt]{article}

\usepackage{fullpage}

\usepackage{amssymb}

\usepackage[ruled,linesnumbered,vlined]{algorithm2e}

\usepackage{graphicx}
\usepackage{amsthm}
\usepackage{amsmath}
\usepackage{amsfonts,mathtools}
\usepackage{booktabs}
\usepackage{paralist}
\usepackage{subfigure}
\usepackage{color}

\newtheorem{theorem}{Theorem}[section]

\newtheorem{definition}{Definition}[section]

\newtheorem{proposition}{Proposition}[section]
\newtheorem{corollary}{Corollary}[section]

\definecolor{auditred}{rgb}{0.60,0.10,0.10}

\newcommand{\PROP}{\mathsf{PROP}}
\newcommand{\WPROP}{\mathsf{WPROP}}

\newcommand{\bX}{\mathbf{X}}
\newcommand{\bY}{\mathbf{Y}}
\newcommand{\bZ}{\mathbf{Z}}

\begin{document}




\title{On the Incompatibility of Weighted PROPX and Pareto Optimality for Indivisible Chores}
\author{}
\date{}


%

\author{Haris Aziz$^1$, Bo Li$^2$\medskip \\
$^1$UNSW Sydney,
\texttt{haris.aziz@unsw.edu.au}\\
$^2$The Hong Kong Polytechnic University,
\texttt{comp-bo.li@polyu.edu.hk}}












\maketitle

\begin{abstract}

Proportionality (PROP) is one of the simplest fairness criteria for allocating items among agents with additive preferences. With indivisible chores, however, PROP is not always satisfiable. We study proportionality up to any item (PROPX), which requires every agent to satisfy proportionality after any chore is removed from her bundle.
Under strictly positive costs, we settle the weighted compatibility question negatively: weighted PROPX and Pareto optimality are incompatible already for two agents and four chores. Moreover, for every $n\geq3$, we give an $n$-agent, $(n+1)$-chore counterexample whose shares can be arbitrarily close to equal.
These counterexamples are item-minimal: under strictly positive costs, weighted PROPX and Pareto optimality are always compatible when the number of chores is at most the number of agents, and they are compatible for two agents with at most three chores.
Our impossibility result contrasts with the compatibility theorem of  Mahara (2026) for weighted envy-freeness up to one item (EF1) and Pareto optimality .



\bigskip

\noindent{{\bf Keywords:} Fair Division, Indivisible Chores, PROPX, Pareto Optimality}

\end{abstract}

\section{Introduction}

Proportionality (PROP), introduced by Steinhaus for the division of a nonatomic cake \cite{steihaus1948problem}, is one of the most basic fairness requirements: each of $n$ agents should receive at least a $1/n$ share of the total value. The same principle applies naturally to the division of undesirable items, or \emph{chores}, where each agent should incur at most a $1/n$ share of her total cost.
For indivisible items, exact proportionality may not be satisfiable. This has motivated several relaxations in the rapidly growing literature on fair allocation without monetary transfers~\cite{Moul19a,DBLP:journals/ai/AmanatidisABFLMVW23}. One popular benchmark is the maximin-share (MMS) guarantee~\cite{conf/bqgt/Budish10}. Although each agent's MMS value is well defined, an allocation satisfying every agent's MMS guarantee need not exist~\cite{journals/jacm/KurokawaPW18,conf/aaai/AzizRSW17,DBLP:conf/wine/FeigeST21}. Moreover, computing an MMS allocation, when one exists, is NP-hard~\cite{conf/aaai/AzizRSW17}. Approximation algorithms for both goods and chores have therefore received substantial attention~\cite{DBLP:journals/corr/abs-2307-07304,akrami2023improving,DBLP:conf/sigecom/HuangS23}.

A simpler relaxation is proportionality up to one item (PROP1)~\cite{conf/sigecom/ConitzerF017}. For goods, an agent may add one suitable good outside her bundle; for chores, she may remove one suitable chore from it. PROP1 is compatible with Pareto optimality for both goods and chores~\cite{DBLP:conf/sigecom/BarmanKV18,journals/orl/AzizMS20}. It is nevertheless a relatively weak guarantee: envy-freeness up to one item (EF1) implies PROP1~\cite{conf/sigecom/LiptonMMS04}. For goods, EF1 is compatible with Pareto optimality~\cite{journals/teco/CaragiannisKMPS19,DBLP:conf/sigecom/BarmanKV18}. Mahara recently proved the corresponding compatibility for additive chores, including weighted EF1~\cite{DBLP:conf/soda/Mahara26}.

We study the stronger requirement of proportionality up to any item (PROPX). For chores, an allocation is PROPX if every agent becomes proportional after the removal of \emph{any} chore from her bundle. Unlike in the goods setting, where PROPX allocations may not exist~\cite{journals/orl/AzizMS20}, PROPX allocations of chores always exist and can be computed efficiently~\cite{Moul19a,DBLP:conf/www/0037L022}.
Existence alone, however, does not guarantee that the allocation is economically meaningful. A fair allocation may be inefficient by assigning chores to agents who incur unnecessarily high costs. Pareto optimality (PO) rules out this issue: an allocation is Pareto optimal if no other allocation weakly reduces every agent's cost and strictly reduces the cost of at least one agent. Compatibility with PO is therefore a fundamental problem of whether a fairness guarantee can be achieved without avoidable inefficiency.
{Aziz et al.~\cite{DBLP:journals/ai/AzizLMWZ24} showed that PROPX and PO can be incompatible when some agent--chore costs are zero. They left open the following central question:}

\begin{quote}
\textit{Are weighted PROPX and PO compatible when every chore has strictly positive cost to every agent?}
\end{quote}

{A key difficulty in answering the above question is that PROPX is incompatible with fractional Pareto optimality (fPO) even under strictly positive costs, as proved in \cite{DBLP:journals/ai/AzizLMWZ24}. Consequently, approaches that obtain efficiency by maintaining fPO, including market-equilibrium and related techniques used in recent work~\cite{DBLP:journals/corr/abs-2212-02440,DBLP:conf/soda/Mahara26}, cannot be applied directly; a solution must exploit integral Pareto optimality without imposing fractional efficiency.}

In this paper, we answer this question negatively. With unequal shares, weighted PROPX and PO are incompatible already for two agents and four chores. Since every cost in our examples is positive, the impossibility also holds for weak weighted PROPX, which imposes the condition only after removing a positive-cost chore. This contrasts with Mahara's compatibility theorem for weighted EF1 and Pareto optimality~\cite{DBLP:conf/soda/Mahara26}.

{For every $n\geq3$, we then construct an $n$-agent counterexample with $n+1$ chores in which the agents' shares are arbitrarily close to equal. Together with the two-agent example, this gives an incompatible weighted instance for every $n\geq2$. Under strictly positive costs, the constructions are item-minimal: compatibility always holds when the number of chores does not exceed the number of agents, and for two agents it also holds with three chores. Thus the failure is not caused by large instances or extreme asymmetry. To the best of our knowledge, the case of exactly equal shares and strictly positive costs remains open for $n\geq3$ when $m>n$.}

Unequal shares arise naturally when agents bear different levels of responsibility. Examples include employees with different job descriptions, university staff with different teaching obligations, and countries with different responsibilities for reducing carbon emissions. This weighted model is well established in cake cutting
\cite{robertson1998cake} and has more recently been studied for relaxations of envy-freeness
\cite{conf/atal/ChakrabortyISZ20,DBLP:conf/sigecom/0001Z023}
and maximin-share fairness
\cite{journals/jair/FarhadiGHLPSSY19,conf/ijcai/0001C019}.

\subsection{Main Results}
\label{sec:main-results}

{Our first result resolves the weighted efficiency question negatively under strictly positive additive costs. 
We construct a two-agent instance with four chores in which the unique weighted PROPX allocation is Pareto dominated.
We then extend the impossibility to every $n\geq3$. 
The counterexamples are also sharp with respect to the number of chores. 
Under strictly positive costs, a weighted PROPX and PO allocation always exists whenever $m\leq n$, for arbitrary positive shares. Hence the $(n+1)$-chore construction is item-minimal for every $n\geq3$. For two agents, we prove 
compatibility whenever $m\leq3$, and thus the two-agent four-chore construction is also item-minimal. These results separate the unequal-share setting from the equal-share setting: for two agents with equal shares, a PROPX and PO allocation always exists~\cite{DBLP:journals/ai/AzizLMWZ24}.
Our counterexamples do not settle the latter, and, to the best of our knowledge, the existence of a PROPX and PO allocation for equal shares and general strictly positive additive costs remains open for $n\geq3$ when $m>n$.}


\subsection{Other Related Work}
\label{sec:relatedworks}

{Aziz et al.~\cite{DBLP:journals/ai/AzizLMWZ24} showed that PROPX and PO may be incompatible when some agent--chore costs are zero. They further showed that, even when every agent--chore cost is strictly positive, PROPX may be incompatible with fractional Pareto optimality, which rules out Pareto improvements by fractional allocations. Since fPO is stronger than PO, this result does not resolve PROPX--PO compatibility. They also proved that, under strictly positive costs, PROPX and PO are compatible for two agents with equal shares and for any number of agents with lexicographic or bi-valued costs, even under unequal shares.}


\paragraph{Weighted Fairness}
Most of the fair-division literature assumes equal entitlement or obligation shares, while a growing literature considers arbitrary and possibly unequal shares.
For example, Farhadi et al.~\cite{journals/jair/FarhadiGHLPSSY19} and Aziz et al.~\cite{conf/ijcai/0001C019} adapted MMS to this setting for goods and chores, respectively, and designed approximation algorithms accordingly.
Babaioff et al.~\cite{journals/corr/abs-2103-04304} introduced AnyPrice Share (APS) for arbitrary entitlements and gave a $3/5$-approximation for goods.
Weighted EF1 allocations are known to exist and can be computed efficiently for goods \cite{conf/atal/ChakrabortyISZ20} and for chores \cite{DBLP:conf/sigecom/0001Z023}.
For chores, Aziz et al.~\cite{DBLP:journals/ai/AzizLMWZ24} showed that weighted PROPX implies a $2$-approximation to APS. Feige and Huang~\cite{DBLP:journals/corr/abs-2211-13951} subsequently improved the approximation factor to $1.733$.

\paragraph{Fairness vs. Efficiency}
{A central question is whether fairness and efficiency can be attained simultaneously. For goods, several compatibility results are known~\cite{DBLP:conf/sigecom/BarmanKV18,DBLP:journals/corr/abs-2204-14229,DBLP:conf/aaai/BarmanK19}.
For chores, earlier work established EF1 and PO for bi-valued costs~\cite{DBLP:conf/atal/EbadianP022,DBLP:conf/aaai/GargMQ22}. Garg et al.~\cite{DBLP:journals/corr/abs-2212-02440} proved the existence of EF1 and fPO allocations for three agents and for instances with at most two distinct cost functions. Mahara~\cite{DBLP:conf/soda/Mahara26} subsequently proved that EF1 and fPO are compatible for arbitrary additive chore costs and extended the result to weighted EF1.}


\section{Model and Solution Concepts}
\label{sec:preliminaries}

We consider the problem of fairly allocating a set of $m$ indivisible chores $M$ to a group of $n$ agents $N$.
Each agent $i\in N$ has a nonnegative additive cost function $c_i:2^M\to\mathbb{R}_{\geq0}$. Thus, $c_i(S)=\sum_{e\in S}c_i(e)$ for every $S\subseteq M$, where $c_i(e)$ abbreviates $c_i(\{e\})$.
Whenever $c_i(M)>0$ for every agent, we may normalize the costs so that $c_i(M)=1$.
An allocation is represented by a partition of the items $\bX=(X_1,\ldots,X_n)$, where each agent $i$ obtains $X_i$, $X_i\cap X_j = \emptyset$ for all $i\neq j$ and $\cup_{i\in N}X_i = M$.

\medskip

We next define {\em proportionality} and its relaxations. We first state the standard equal-share definitions and then extend proportionality to unequal shares.

\begin{definition}[PROP]
An allocation $\bX$ is proportional (PROP) if $c_i(X_i) \le \PROP_i$ for every $i\in N$, where $\PROP_i = c_i(M)/n$ is agent $i$'s proportionality.
\end{definition}

For normalized cost functions $\PROP_i = 1/n$ for all $i\in N$.

	




\begin{definition}[PROP1 and PROPX]
	\label{def:propx}
	An allocation $\bX$ is proportional up to one item (PROP1) if, for every $i\in N$ with $X_i\neq\emptyset$, there exists $e\in X_i$ such that $c_i(X_i\setminus\{e\})\leq\PROP_i$.
	The allocation is proportional up to any item (PROPX) if for any $i\in N$ and any $e\in X_i$, $c_i(X_i \setminus \{e\}) \le\PROP_i$.
\end{definition}
In the weighted setting, every agent $i$ has a share $s_i>0$, where $\sum_{i\in N}s_i=1$.

\begin{definition}[Weighted PROP1 and PROPX]
\label{def:weighted-propx}
Let $\WPROP_i=s_i c_i(M)$ denote agent $i$'s weighted proportionality.
An allocation $\bX$ is weighted PROP1 if, for every $i\in N$ with $X_i\neq\emptyset$, there exists $e\in X_i$ such that
$c_i(X_i\setminus\{e\})\leq \WPROP_i$.
It is weighted PROPX if, for every $i\in N$ and every $e\in X_i$,
$c_i(X_i\setminus\{e\})\leq \WPROP_i$.
When $s_i=1/n$ for every $i$, these definitions reduce to PROP1 and PROPX.
\end{definition}

We use the zero-tolerant version of weighted PROPX, which imposes the inequality above even when $c_i(e)=0$. Weak weighted PROPX imposes it only for chores $e\in X_i$ with $c_i(e)>0$. 
The two definitions coincide when $c_i(e)>0$ for every agent $i$ and chore $e$. Since our negative results hold for the case of strictly positive costs, it implies that our negative results also hold for the weaker version of PROPX.

Every PROPX allocation is PROP1, and every weighted PROPX allocation is weighted PROP1; the converses may not hold. For additive costs, PROPX and weighted PROPX allocations exist and can be found in polynomial time~\cite{Moul19a,DBLP:conf/www/0037L022}. We therefore focus on weighted PROPX allocations.

\begin{definition}[PO]
	An allocation $\bY=(Y_1,\ldots,Y_n)$ \emph{Pareto dominates} an allocation
	$\bX=(X_1,\ldots,X_n)$ if $c_i(Y_i)\leq c_i(X_i)$ for every $i\in N$, and $c_j(Y_j)<c_j(X_j)$ for at least one $j\in N$.
	An allocation $\bX$ is \emph{Pareto optimal} (PO) if it is not Pareto
	dominated by any allocation.
\end{definition}

%


\section{Weighted PROPX and Pareto optimality}
\label{sec:weighted-propx-po}

Throughout this section, all chore costs are strictly positive. For a nonempty bundle $S$, denote its PROPX residual for agent $i$ by $\rho_i(S)=c_i(S)-\min_{e\in S}c_i(e)$, and let $\rho_i(\emptyset)=0$. By additivity, an allocation $\bX$ is weighted PROPX if and only if $\rho_i(X_i)\leq s_i c_i(M)$ for every $i\in N$.

\subsection{Two agents}

\begin{theorem}
\label{thm:two-agent-weighted-counterexample}
Weighted PROPX and PO are incompatible even for two agents and four chores with strictly positive additive costs.
\end{theorem}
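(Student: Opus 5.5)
The plan is to give an explicit instance and verify it by exhaustive enumeration. With two agents and four chores there are only $2^4=16$ allocations. For each one, weighted PROPX reduces, as noted above, to the two inequalities $\rho_1(X_1)\le s_1c_1(M)$ and $\rho_2(X_2)\le s_2c_2(M)$. The goal is to choose costs and shares so that the set of weighted PROPX allocations is nonempty but every member of it is Pareto dominated by some other allocation (not necessarily a fair one). Ideally exactly one allocation $\bX^*$ survives the PROPX filter, and we exhibit a single $\bY$ dominating it. I fix $s_1<s_2$ from the start: this is necessary, since for equal shares PROPX and PO are compatible for two agents~\cite{DBLP:journals/ai/AzizLMWZ24}.

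\textbf{Design of the instance.} Call agent $1$ the low-share agent. Her small share should make PROPX bite: any bundle of hers with two or more chores has residual at least its cheapest chore, so I choose $s_1$ so that she can hold at most one or two chores of prescribed types. For agent $2$ I include one or two chores that are very cheap to her (cost $\delta$) next to chores of cost about $1$. Then $\rho_2$ subtracts only $\delta$, and agent $2$ fails PROPX whenever she keeps almost all the expensive chores. This forces the few PROPX allocations to hand agent $1$ specific chores.

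Efficiency pulls the other way. In a PO allocation agent $1$ should take the chores she is comparatively good at, and these should be exactly the ones PROPX forbids her to hold together. The mechanism I aim for is as follows. In $\bX^*$ agent $1$ holds one chore $y$. In $\bY$ she instead holds two chores $\{x,t\}$ with $c_1(x)+c_1(t)\le c_1(y)$, and agent $2$'s cost weakly drops, since $c_2(x)+c_2(t)\ge c_2(y)$. One inequality is strict. But $\rho_1(\{x,t\})=\min(c_1(x),c_1(t))>s_1c_1(M)$, so $\bY$ violates PROPX for agent $1$.

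\textbf{Order of the argument.}
\begin{enumerate}
\item Write down the cost table and the shares, with small rational parameters.
\item Compute $s_ic_i(M)$ for each agent.
\item Rule out every allocation other than $\bX^*$ by listing, for each one, the agent whose residual exceeds her weighted share. The 16 allocations are grouped by $|X_1|\in\{0,\dots,4\}$: the cases $|X_1|\ge2$ are killed by agent $1$'s bound, and the cases $|X_1|\le1$ mostly by agent $2$'s bound.
\item Check that $\bX^*$ satisfies both PROPX inequalities.
\item Exhibit $\bY$ and verify Pareto domination directly.
\end{enumerate}

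\textbf{Main obstacle.} The hard step is choosing parameters so that the constraints are simultaneously consistent. Agent $2$'s bound must be loose enough to admit $\bX^*$, in which she holds three chores. Yet it must be tight enough to exclude the other single-chore allocations and the empty-$X_1$ allocation. Agent $1$'s bound must exclude $\{x,t\}$ while her costs still satisfy $c_1(x)+c_1(t)\le c_1(y)$. Both constraints involve the cheapest chore in the bundle, so they couple.

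A first attempt with agent $2$ costs $(\delta,\delta,1,1)$ fails. The single chore that agent $1$ values least and agent $2$ values most can never be dominated by another single chore, so that allocation must itself be excluded by agent $2$'s PROPX bound. I would therefore solve the small linear system of strict inequalities, with $s_1$, $\delta$ and agent $1$'s four costs as unknowns. I expect some configuration with $s_1$ around $1/5$ to $1/3$ to work. I would then round to clean numbers and present the final verification as a table of the 16 allocations.
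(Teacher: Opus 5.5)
Your strategy is the paper's: a small share caps agent~1 at one chore, agent~2's bound leaves a unique weighted PROPX allocation $\bX^*=(\{y\},M\setminus\{y\})$, and giving agent~1 a pair $\{x,t\}$ dominates it. But the theorem is existential, so the instance \emph{is} the proof. You never produce one; ``solve the small linear system'' is left undone. Worse, the system as you formulate it has no solution. For a two-chore bundle, $\rho_1(\{x,t\})=\max\{c_1(x),c_1(t)\}$, not the minimum. More generally, a bundle of at least two chores has residual at least its \emph{most} expensive chore. Your constraint $\min\{c_1(x),c_1(t)\}>s_1c_1(M)$, together with $c_1(x)+c_1(t)\le c_1(y)$, gives $c_1(M)>c_1(x)+c_1(t)+c_1(y)>4s_1c_1(M)$, so $s_1<1/4$.

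On agent~2's side, normalize $c_2(M)=1$. She passes with $M\setminus\{e\}$ iff $c_2(e)+\min_{f\neq e}c_2(f)\ge s_1$. Uniqueness of $\bX^*$ needs this for $e=y$ and its failure for each $e\in\{x,t,z\}$, since agent~1 always passes with a singleton. Domination needs $c_2(y)\le c_2(x)+c_2(t)$. Now split on agent~2's cheapest chore. If it is $y$, failure at $x$ gives $c_2(x)+c_2(y)<s_1$; this contradicts the pass at $y$, because $\min\{c_2(x),c_2(t),c_2(z)\}\le c_2(x)$. If it is $x$ (or $t$), then $c_2(x)+c_2(t)$, $c_2(z)$ and $c_2(y)$ are each below $s_1$, so $s_1>1/3$. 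If it is $z$, then $c_2(x),c_2(t)<s_1-c_2(z)$, so $1\le 2(c_2(x)+c_2(t))+c_2(z)<4s_1$. In every case $s_1>1/4$, which contradicts $s_1<1/4$.

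The repair is to require only $\max\{c_1(x),c_1(t)\}>s_1c_1(M)$, so that one chore of the dominating pair can be very cheap for agent~1. The paper's instance does exactly this. It uses shares $(3/10,7/10)$, $c_1=(12,1,10,10)$ and $c_2=(11,6,1,6)$ on $(a,b,c,d)$. Every chore except $b$ costs agent~1 more than $99/10$, so $|X_1|\le1$. Checking the five remaining allocations against $s_2c_2(M)=84/5$ leaves only $(\{a\},\{b,c,d\})$, with costs $(12,13)$. This is dominated by $(\{b,d\},\{a,c\})$, with costs $(11,12)$. Here $\min\{c_1(b),c_1(d)\}=1$, so your constraint would have ruled out precisely this instance.
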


\begin{proof}
Let $N=\{1,2\}$ and $M=\{a,b,c,d\}$. Let the shares be $(s_1,s_2)=(3/10,7/10)$ and the costs be as follows.
\[
\begin{array}{c|cccc|c|c}
\text{Agent} &a&b&c&d&c_i(M)&s_i c_i(M)\\ \hline
1&12&1&10&10&33&99/10\\
2&11&6&1&6&24&84/5
\end{array}
\]
All costs are strictly positive.

Each of $a,c,d$ costs agent~$1$ more than $99/10$. Hence $|X_1|\leq1$ in every weighted PROPX allocation: any larger bundle contains one of these chores and another chore, whose removal leaves cost above agent~$1$'s share. The five possible bundles $X_1$ give the following residual costs for agent~$2$.
\[
\begin{array}{c|c|c}
X_1&X_2&\rho_2(X_2)\\ \hline
\emptyset&\{a,b,c,d\}&23\\
\{a\}&\{b,c,d\}&12\\
\{b\}&\{a,c,d\}&17\\
\{c\}&\{a,b,d\}&17\\
\{d\}&\{a,b,c\}&17
\end{array}
\]
Since $s_2c_2(M)=84/5$, the unique weighted PROPX allocation is $\bX=(\{a\},\{b,c,d\})$, with cost vector $(12,13)$. Let $\bY=(\{b,d\},\{a,c\})$. Its cost vector is $(11,12)$, so both agents strictly improve and $\bY$ Pareto dominates $\bX$. Thus no weighted PROPX allocation is PO.
\end{proof}

This is a sharp contrast to the case of two agents with equal shares, where a PROPX and PO allocation always exists~\cite{DBLP:journals/ai/AzizLMWZ24}. 

\subsection{More than two agents}

The following construction gives counterexamples for every $n\geq3$.

\begin{theorem}
\label{thm:share-cone}
Let $n\geq3$ and let $(s_1,\ldots,s_n)$ be a positive share vector. Suppose, after relabeling, that $s_i<1/n$ for every $i<n$, while $1/n<s_n<2/(n+2)$. Then there is an instance with $n$ agents, $n+1$ chores, and strictly positive normalized additive costs that admits no allocation that is both weighted PROPX and PO.
\end{theorem}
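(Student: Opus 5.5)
The plan is to generalize the two-agent construction in Theorem~\ref{thm:two-agent-weighted-counterexample}. I will build an instance whose weighted PROPX allocations are few and rigidly structured, list them exhaustively using the residual $\rho_i$, and exhibit a Pareto improvement for each. With $n+1$ chores and $n$ agents, some agent must receive at least two chores. The share conditions should decide who can absorb the extra chore:
\begin{itemize}
\item $s_i<1/n$ for $i<n$ should make the small agents unable to take two chores, except in one controlled configuration;
\item $s_n>1/n$ should let agent $n$ take a suitable pair;
\item $s_n<2/(n+2)$ should prevent agent $n$ from taking three chores, or a pair that is cheap for it.
\end{itemize}

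\textbf{Costs of the small agents.} Each agent $i<n$ gets $n$ ``heavy'' chores of cost $t_i$, with $s_i<t_i<1/n$. This is possible since $ns_i<1$. Its remaining ``light'' chore gets cost $1-nt_i>0$, which gives normalized costs. For a pair, $\rho_i$ equals the smaller cost. So any bundle of two heavy chores, and any bundle of three or more chores, violates weighted PROPX for $i$. The only two-chore bundle $i$ can hold is its light chore plus one heavy chore.

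I will choose the light chores so that all small agents share a common light chore $\ell$. Then at most one small agent can hold a pair, and only a pair containing $\ell$.

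\textbf{Cost of agent $n$.} I will make $\ell$ expensive for agent $n$ and spread its remaining cost so that:
\begin{itemize}
\item every three-chore bundle has residual above $s_n$; this is where $s_n<2/(n+2)$ is used, with most chores costing agent $n$ about $1/(n+2)$ or slightly more;
\item every pair avoiding $\ell$ has its smaller cost at most $s_n$, using $s_n>1/n$ for room;
\item no bundle containing $\ell$ is admissible for agent $n$.
\end{itemize}

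\textbf{Enumerating weighted PROPX allocations.} The admissible allocations then fall into two families:
\begin{enumerate}
\item[(A)] agent $n$ takes a pair avoiding $\ell$, and the small agents take singletons, one of which is $\ell$;
\item[(B)] some small agent takes $\ell$ together with a heavy chore, and everyone else takes singletons.
\end{enumerate}
I will tune the costs so that in every such allocation some small agent holds a heavy chore it values at $t_i$ while agent $n$ holds something it finds costly. The Pareto improvement is a swap or a cyclic reallocation that sends the extra chore toward whoever finds it cheapest, in the spirit of $\bY=(\{b,d\},\{a,c\})$ in the two-agent proof. Concretely, I would make one designated chore $h$ slightly cheaper for each small agent than its other heavy chores, breaking the symmetry of $t_i$. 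Reassigning so that a small agent trades a heavy chore for $h$, and agent $n$ takes a pair that it values more cheaply, should then weakly improve everyone and strictly improve someone. Such a trade need not be weighted PROPX, and that is exactly the point.

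\textbf{Main obstacle.} The hard step is simultaneous parameter tuning. The residual inequalities must exclude every non-listed allocation, which needs strict slack on both sides of $s_i$ and $s_n$. The dominating allocations need cost differences of the opposite sign, and normalization ties everything together. Both share inequalities are strict, so an $\varepsilon$-perturbation scheme is available, and I would first set the costs at the boundary values $1/n$ and $2/(n+2)$ and then perturb. If uniformly dominating every family (A)/(B) allocation proves infeasible, the fallback is to use asymmetric light chores, so that only one small agent can ever hold a pair. That reduces the enumeration to essentially two allocation shapes, as in the two-agent case.
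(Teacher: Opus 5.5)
\paragraph{There is a genuine gap, and it starts with a miscomputed residual.} For a pair, $\rho_i(\{x,y\})=c_i(x)+c_i(y)-\min\{c_i(x),c_i(y)\}=\max\{c_i(x),c_i(y)\}$. This is the \emph{larger} cost, because removing the cheaper chore is the binding test.

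Since every heavy chore costs $t_i>s_i$, a small agent cannot hold \emph{any} two-chore bundle, including light-plus-heavy. So your family (B) is empty, and your fallback targets a configuration that cannot occur. The condition on agent $n$ must also read ``both chores of the pair cost at most $s_n$.''

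After this correction, your enumeration coincides with the paper's. Agent $n$ holds two chores other than $\ell$, and every small agent holds a singleton, one of which is $\ell$. Your $\ell$ plays the role of the paper's $a$ for agent $n$.

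\paragraph{The decisive step fails with a common light chore.} The dominating allocation is the step you leave to tuning, and with a \emph{common} light chore it does not exist. Take your costs as described:
\begin{itemize}
\item for every small agent, $\ell$ is strictly cheapest and $h$ second cheapest, and all other chores cost her $t_i$;
\item $c_n(\ell)>s_n\geq c_n(z)$ for every $z\neq\ell$.
\end{itemize}
Let $\bX$ give agent $n$ a cheapest pair $P^*$ of chores other than $\ell$, choosing one that avoids $h$ if possible. $\bX$ is weighted PROPX. Now consider any $\bY$ that weakly improves everyone.
\begin{itemize}
\item Agent $n$ cannot take $\ell$ together with another chore, since $c_n(\ell)+c_n(z)>c_n(P^*)$. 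So she takes at most two chores, and if two, a pair of cost exactly $c_n(P^*)$.
\item The holder of $\ell$ accepts only $\{\ell\}$ or $\emptyset$.
\item A small holder of $h$ accepts only $\{h\}$, $\{\ell\}$ or $\emptyset$.
\item Every other small agent accepts only a single chore or $\{\ell,h\}$.
\end{itemize}
Taking $\{\ell,h\}$ empties the holder of $\ell$, and also the holder of $h$ if she is a small agent. Hence the small agents absorb at most $n-1$ chores. Counting the $n+1$ chores then forces agent $n$ to keep a pair of cost $c_n(P^*)$ and every small agent to keep a singleton of unchanged cost. Nobody strictly improves, so $\bX$ is weighted PROPX and PO, and your instance is not a counterexample.

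\paragraph{The missing idea.} The chore that agent $n$ can hold only alone must be \emph{expensive} for every small agent. It should be priced exactly as her own \emph{private} light chore plus one heavy chore: $c_i(a)=c_i(p_i)+c_i(p_n)=\varepsilon_i+t_i$. Then, from any weighted PROPX allocation:
\begin{itemize}
\item agent $n$ trades her pair for $\{a\}$, which is strictly better because $A<2r$ (the second use of $s_n<2/(n+2)$);
\item every small agent moves to her own $p_i$, which is weakly better;
\item the former holder $k$ of $a$ takes $\{p_k,p_n\}$ at exactly $c_k(a)$.
\end{itemize}
That last bundle violates weighted PROPX, which is why the improvement escapes the enumeration. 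Private light chores are essential, because all $n-1$ small agents must be moved simultaneously to something no worse.
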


\begin{proof}
Let $M=\{a,p_1,\ldots,p_n\}$. Let agents $1,\ldots,n-1$ be the outer agents and agent $n$ the central agent. For each outer agent $i<n$, let
\[
0<\varepsilon_i<
\min\left\{\frac{1-ns_i}{2},\frac1{n+2}\right\},
\qquad
t_i=\frac{1-2\varepsilon_i}{n}.
\]
For the central agent, let $s_n<A<2/(n+2)$ and $r=(1-A)/n$. These parameters exist by the assumptions of the theorem.
The cost functions are given in the following table.
\[
\begin{array}{c|ccccc|cc}
\text{Agent} & a & p_1 & p_2 & \cdots & p_n & c_i(M) & s_i c_i(M)\\ \hline
1 & t_1+\varepsilon_1 & \varepsilon_1 & t_1 & \cdots & t_1 & 1 & s_1\\
2 & t_2+\varepsilon_2 & t_2 & \varepsilon_2 & \cdots & t_2 & 1 & s_2\\
\vdots & \vdots & \vdots & \vdots & \ddots & \vdots & \vdots & \vdots\\
n & A & r & r & \cdots & r & 1 & s_n
\end{array}
\]
That is, for all $i<n$, $c_i(a)=t_i+\varepsilon_i$, $c_i(p_i)=\varepsilon_i$ and $c_i(p_j)=t_i$ for all $j\neq i$.
For the central agent $n$, $c_n(a)=A$ and $c_n(p_j)=r$ for $j=1,\ldots,n$.
It can be verified that the row sum is $(t_i+\varepsilon_i)+\varepsilon_i+(n-1)t_i=nt_i+2\varepsilon_i=1$ for $i<n$ and $A+nr=1$ for the central agent. All entries are strictly positive. Hence each agent's weighted proportionality equals her share.

Consider a weighted PROPX allocation. For every outer agent $i<n$, $t_i-s_i=(1-ns_i-2\varepsilon_i)/n>0$. Thus every chore other than $p_i$ costs more than $s_i$. A bundle of at least two chores contains a chore other than $p_i$; deleting a different chore leaves residual cost above $s_i$. Hence every outer agent receives at most one chore.

The central agent receives at least two chores, since there are $n+1$ chores and only $n-1$ outer agents. She cannot receive $a$ with another chore because $A>s_n$. Also, $A<2/(n+2)$ is equivalent to $A<2(1-A)/n=2r$, so $2r>A>s_n$. Hence she cannot receive three or more $p$-chores. Thus the central agent receives exactly two $p$-chores, each outer agent receives a singleton, and one outer agent receives $a$. Every allocation of this form is weighted PROPX: the outer bundles are singletons, while deleting either chore from the central pair leaves cost $r<1/n<s_n$.

Fix such an allocation $\bX$, and let $k<n$ receive $a$. Let $\bY$ denote the allocation with $Y_n=\{a\}$, $Y_k=\{p_k,p_n\}$, and $Y_i=\{p_i\}$ for every $i<n$ with $i\neq k$. Agent $k$ is indifferent, since $c_k(Y_k)=\varepsilon_k+t_k=c_k(a)=c_k(X_k)$. Every other outer agent weakly improves: her singleton under $\bX$ costs either $\varepsilon_i$ or $t_i$, while $c_i(Y_i)=\varepsilon_i$, and $t_i-\varepsilon_i=(1-(n+2)\varepsilon_i)/n>0$. The central agent strictly improves because $c_n(Y_n)=A<2r=c_n(X_n)$. Thus $\bY$ Pareto dominates $\bX$, and no weighted PROPX allocation is PO.
\end{proof}

\begin{corollary}
\label{cor:near-equal-shares}
Let $n\geq3$ and $0<\delta<(n-2)/(n(n-1)(n+2))$. There is an $n$-agent, $(n+1)$-chore instance with strictly positive additive costs and shares $s_i=1/n-\delta$ for $i<n$ and $s_n=1/n+(n-1)\delta$ that admits no allocation that is both weighted PROPX and PO. Thus incompatibility occurs arbitrarily close to equal shares for every $n\geq3$.
\end{corollary}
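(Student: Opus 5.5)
The plan is to apply Theorem~\ref{thm:share-cone} directly to the share vector $s_i=1/n-\delta$ for $i<n$ and $s_n=1/n+(n-1)\delta$. Since that theorem already produces an $n$-agent, $(n+1)$-chore instance with strictly positive normalized additive costs and no weighted PROPX and PO allocation, it only remains to check that this share vector satisfies its hypotheses. The last sentence of the corollary then follows by letting $\delta\to0$.

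The verification has three parts.
\begin{enumerate}
\item The shares sum to one: $(n-1)(1/n-\delta)+1/n+(n-1)\delta=1$.
\item Every share is positive. For $i<n$, $\delta<(n-2)/(n(n-1)(n+2))<1/n$ because $(n-2)<(n-1)(n+2)$, so $s_i>0$. Clearly $s_n>0$.
\item The strict inequalities of Theorem~\ref{thm:share-cone} hold: $s_i=1/n-\delta<1/n$ for $i<n$, and $s_n=1/n+(n-1)\delta>1/n$, since $\delta>0$ and $n\ge3$.
\end{enumerate}

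The only step with content is the upper bound $s_n<2/(n+2)$. We have
\[
\frac{2}{n+2}-\frac1n=\frac{n-2}{n(n+2)},
\]
so the condition $s_n<2/(n+2)$ is equivalent to
\[
(n-1)\delta<\frac{n-2}{n(n+2)},
\]
that is, $\delta<(n-2)/(n(n-1)(n+2))$. This is exactly the assumed range. The interval for $\delta$ is nonempty because $n\ge3$ makes $n-2>0$.

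With the hypotheses verified, Theorem~\ref{thm:share-cone} gives the instance. Since every share tends to $1/n$ as $\delta\to0$, and admissible $\delta$ can be taken arbitrarily small, incompatibility occurs arbitrarily close to equal shares for every $n\ge3$. There is no real obstacle here beyond keeping the algebra for the upper bound correct.
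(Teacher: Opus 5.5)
Your proposal is correct and follows the paper's proof: you check that the shares sum to one and are positive (using $\delta<1/n$), that $s_i<1/n<s_n$, and that $s_n<2/(n+2)$ is equivalent to the stated bound on $\delta$, and then you apply Theorem~\ref{thm:share-cone}. Your algebra for $2/(n+2)-1/n=(n-2)/(n(n+2))$ and for the positivity bound is correct.
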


\begin{proof}
The shares sum to one. Moreover, the upper bound on $\delta$ is smaller than $1/n$, so every share is positive. The first $n-1$ shares are below $1/n$, and the last is above $1/n$. The inequality $s_n<2/(n+2)$ is equivalent to the stated upper bound on $\delta$, so Theorem~\ref{thm:share-cone} applies. Since $\delta$ may be arbitrarily small, the shares may be arbitrarily close to $1/n$.
\end{proof}


Theorem~\ref{thm:share-cone} covers a region of unequal share vectors, but not every such vector. Its construction requires $n\geq3$; Theorem~\ref{thm:two-agent-weighted-counterexample} treats two agents separately. For every $n\geq3$, Corollary~\ref{cor:near-equal-shares} gives counterexamples arbitrarily close to equal shares. In this construction, $\varepsilon_i<(1-ns_i)/2=n\delta/2$, so $\varepsilon_i$ tends to zero with $\delta$. The construction therefore cannot be specialized to equal shares while retaining strictly positive costs. Whether PROPX and PO are compatible under equal shares and strictly positive additive costs remains open for $n\geq3$ when $m>n$.

\subsection{Number of chores}

For every $n\geq3$, the $(n+1)$-chore counterexamples in Theorem~\ref{thm:share-cone}, including the near-equal instances in Corollary~\ref{cor:near-equal-shares}, use the minimum possible number of chores.

\begin{proposition}
\label{prop:m-at-most-n}
If all costs and shares are strictly positive and $m\leq n$, then a weighted PROPX and PO allocation exists.
\end{proposition}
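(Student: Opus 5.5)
The plan is to exploit the fact that when $m\leq n$ there is enough room to give every agent at most one chore, and that any such allocation is trivially weighted PROPX. Let $\mathcal{C}$ be the set of allocations with $|X_i|\leq1$ for all $i$. For $\bX\in\mathcal{C}$ every residual is $\rho_i(X_i)=0\leq s_ic_i(M)$, so every member of $\mathcal{C}$ is weighted PROPX. It therefore suffices to find an allocation that is PO among \emph{all} allocations and that is either in $\mathcal{C}$ or at least has all residuals bounded by the shares.

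First I pick $\bX\in\mathcal{C}$ that is Pareto optimal within $\mathcal{C}$. Such an $\bX$ exists because $\mathcal{C}$ is finite and nonempty. Among all allocations that weakly Pareto dominate $\bX$, including $\bX$ itself, I then choose one, $\bY$, that is globally PO; again this exists by finiteness, since following a chain of strict Pareto improvements must terminate. If $\bY\in\mathcal{C}$, we are done. Otherwise some agent $i$ has $|Y_i|\geq2$.

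The key step is to derive a contradiction from $|Y_i|\geq2$ using strict positivity. Because $c_i(Y_i)\leq c_i(X_i)$ and $Y_i$ contains at least two positive-cost chores, $X_i$ must be a singleton $\{x_i\}$ with $c_i(x_i)>c_i(e)$ for every $e\in Y_i$. Consider the agents who are overloaded in $\bY$, the set $O=\{i:|Y_i|\geq2\}$, and the agents who are empty in $\bY$, the set $E=\{i:Y_i=\emptyset\}$. Since $m\leq n$, we have $|E|\geq\sum_{i\in O}(|Y_i|-1)$.

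From this I build $\bZ\in\mathcal{C}$ as follows. Each $i\in O$ keeps one chore of $Y_i$, which costs her strictly less than $x_i$. Agents with singleton bundles keep them. The surplus chores are placed one each on agents in $E$. I then argue that $\bZ$ weakly Pareto dominates $\bX$, and strictly so because agents in $O$ improve strictly. This contradicts the Pareto optimality of $\bX$ within $\mathcal{C}$.

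The main obstacle is the treatment of agents in $E$. Such an agent $j$ has $c_j(Y_j)=0\leq c_j(X_j)$, and if $X_j\neq\emptyset$ she may be harmed by receiving a surplus chore in $\bZ$. To handle this I would route chores along the ``trading'' structure between $\bX$ and $\bY$. Give each $j\in E$ with $X_j=\{x_j\}$ her old chore $x_j$, and use a counting or alternating-path argument: follow $x_j$ to its holder in $\bY$, and so on. The aim is to show that the surplus chores can be assigned so that every agent receives either her $\bX$-chore or something she weakly prefers, and agents with $X_j=\emptyset$ in $E$ absorb the rest. The first thing I would try is to choose $\bX$ to be lexicographically minimal, for instance maximizing the number of nonempty bundles, and then choose $\bY$ to minimize $\sum_i\max\{0,|Y_i|-1\}$. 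The goal of these choices is that the cycle and path decomposition of $\bX\triangle\bY$ closes up; if it does not, there should be a direct Pareto improvement of $\bY$ inside $\mathcal{C}$.
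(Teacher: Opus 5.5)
\textbf{There is a genuine gap.} Your preliminary steps are fine: choosing $\bX$ Pareto optimal within $\mathcal{C}$, the count $|E|\geq\sum_{i\in O}(|Y_i|-1)$, and the observation that an overloaded agent strictly prefers any single chore of $Y_i$ to $X_i$. But the step that carries the proof, turning $\bY$ into an injective allocation that weakly dominates $\bX$, is only announced, and the one concrete rule you give is wrong. You let agents in $E$ with $X_j=\emptyset$ ``absorb the rest''. Such an agent has cost $0$ under $\bX$, so under strictly positive costs any chore makes her strictly worse off, and $\bZ$ would no longer weakly dominate $\bX$. In fact the reverse holds: $c_j(Y_j)\leq c_j(X_j)=0$ forces $Y_j=\emptyset$, so these agents absorb nothing. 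Every surplus chore must go back to an agent of $I=\{i:X_i\neq\emptyset\}$ in a way that leaves her no worse off than with her own $\bX$-chore. Neither the counting bound nor the lexicographic tie-breaking you suggest guarantees this, and you never show that the decomposition ``closes up''.

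\textbf{The missing idea.} Your remark about following $x_j$ to its holder in $\bY$ is the right germ, and it needs to be carried out as a cycle rotation. Write $X_i=\{e_i\}$ for $i\in I$; these are all the chores, and by the above they are all held in $\bY$ by agents of $I$. Let $f(i)\in I$ be the holder of $e_i$ in $\bY$; every component of this functional graph contains a directed cycle. Define $\bZ$ by giving $e_i$ to $f(i)$ when $i$ lies on a cycle, and letting every non-cycle agent of $I$ keep $e_i$. Then $\bZ\in\mathcal{C}$, and for a cycle arc $i\to j$ we have $c_j(e_i)\leq c_j(Y_j)\leq c_j(e_j)$, so $\bZ$ weakly dominates $\bX$. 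Now use $\bY\notin\mathcal{C}$: some agent holds two chores, so $f$ is not a permutation. Hence some non-cycle vertex $h$ has $f(h)=j$ with $j$ on a cycle, and $j$ also has a cycle predecessor $p$. Then $\{e_h,e_p\}\subseteq Y_j$, and strict positivity gives $c_j(Z_j)=c_j(e_p)<c_j(Y_j)\leq c_j(X_j)$, contradicting the choice of $\bX$. The same argument works for any $\bY$ that strictly dominates $\bX$: when $f$ is a permutation, $\bZ=\bY$. So $\bX$ itself is PO, and both the detour through a globally PO $\bY$ and the lexicographic choices are unnecessary. This is exactly the paper's proof.
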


\begin{proof}
Call an allocation injective if every agent receives at most one chore. Since $m\leq n$, injective allocations exist, and each is weighted PROPX. Let $\bX$ be Pareto optimal within the finite set of injective allocations.

Suppose an allocation $\bY$ Pareto dominates $\bX$. Let $I=\{i:X_i\neq\emptyset\}$ and denote the chore in $X_i$ by $e_i$ for every $i\in I$. Every agent outside $I$ has zero cost under $\bX$ and must remain empty under $\bY$. Thus $\bY$ assigns all chores among the agents in $I$.

On vertex set $I$, draw an arc $i\to j$ if $e_i\in Y_j$. Every vertex has outdegree one, so each component contains a directed cycle. Obtain $\bZ$ by rotating the chores along these cycles; all other agents keep their bundles from $\bX$. The resulting allocation is complete and injective. If $i\to j$ lies on a cycle, then $c_j(Z_j)=c_j(e_i)\leq c_j(Y_j)\leq c_j(e_j)=c_j(X_j)$. Hence $\bZ$ weakly improves every agent.

If every vertex lies on a cycle, then $\bZ=\bY$, so some agent improves strictly. Otherwise, let $h\to j$ be an arc with $h$ outside a cycle and $j$ on that cycle, and let $p$ be the predecessor of $j$ on the cycle. Agent $j$ receives both $e_h$ and $e_p$ under $\bY$. Strict positivity gives $c_j(Z_j)=c_j(e_p)<c_j(Y_j)\leq c_j(X_j)$. Thus $\bZ$ is an injective allocation that Pareto dominates $\bX$, a contradiction. Therefore $\bX$ is weighted PROPX and PO.
\end{proof}

For two agents, compatibility extends one chore beyond the general bound $m\leq n$.

\begin{proposition}
	\label{prop:two-agents-three-chores}
	For two agents with strictly positive additive costs and arbitrary positive shares, a weighted PROPX and PO allocation exists when $m\leq3$.
\end{proposition}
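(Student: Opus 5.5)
Since $m\leq 2\leq n$ is already covered by Proposition~\ref{prop:m-at-most-n}, I would only treat $m=3$, with $M=\{e,f,h\}$, normalized costs $c_i(M)=1$ and shares $s_1+s_2=1$. With three chores the weighted PROPX condition is very concrete:
\begin{itemize}
\item an empty bundle or a singleton is always admissible;
\item a pair $\{x,y\}$ is admissible for agent $i$ iff $\max\{c_i(x),c_i(y)\}\leq s_i$;
\item the full bundle is admissible for $i$ iff $1-\min_x c_i(x)\leq s_i$.
\end{itemize}
The plan is to start from a weighted PROPX allocation $\bX$ that is Pareto optimal \emph{within} the finite, nonempty set of weighted PROPX allocations. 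I then argue that any allocation $\bY$ Pareto dominating $\bX$ can be replaced by a weighted PROPX allocation that also dominates $\bX$, which is a contradiction. I may take $\bY$ to be Pareto optimal in the full set by following a dominance chain, which is finite.

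The first step disposes of the easy shapes of $\bY$. If $\bY$ gives all three chores to one agent $j$, then $c_j(X_j)\geq c_j(Y_j)=1$ forces $X_j=M$ by strict positivity, so $\bY=\bX$, which is impossible. Hence $\bY$ has the form (singleton, pair); say agent $i$ gets a pair $P$ and agent $j$ gets a singleton. The singleton is always admissible, so only $P$ can fail, namely when some $g\in P$ has $c_i(g)>s_i$. I then split on the shape of $X_i$:
\begin{itemize}
\item If $X_i=M$, then $\rho_i(M)\geq\max_x c_i(x)\geq c_i(g)$, so $P$ is admissible.
\item If $X_i=Q$ is an admissible pair with $Q\neq P$, then $g\notin Q$, so $X_j=\{g\}$ and $Q=\{f',h'\}$ with $P=\{g,f'\}$. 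From $c_i(P)\leq c_i(Q)$ we get $c_i(g)\leq c_i(h')\leq s_i$, a contradiction. (If $Q=P$ then $P$ is admissible.)
\item If $X_i=\emptyset$, then $c_i(P)\leq 0$ is impossible.
\end{itemize}

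The remaining case is the one I expect to be the main obstacle. Here $X_i=\{e\}$ is a singleton, and $c_i(P)\leq c_i(e)$ forces $e\notin P$, so $P=X_j=Q=\{f,h\}$ and $\bY$ is exactly the swap of $\bX$. Here $c_i(Q)<c_i(e)$ (strictness is needed for agent $i$ unless agent $j$ strictly gains), $c_j(e)\leq c_j(Q)$, and $c_i(f)>s_i$. Then $c_i(e)>c_i(f)>s_i$ and $c_i(M)>2s_i$, so $s_i<1/2$ and $s_j>1/2$. I would exploit this by examining the candidate allocations $\bZ=(\{h\},\{e,f\})$ (agent $i$ gets $\{h\}$) and $\bZ'=(\emptyset,M)$, and checking which of them is weighted PROPX and dominates $\bX$.

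For $\bZ$, agent $i$ strictly improves because $c_i(h)<c_i(Q)<c_i(e)$. For $\bZ$ to work for agent $j$, I would need $c_j(e)\leq c_j(h)$, together with $\max\{c_j(e),c_j(f)\}\leq s_j$. Using $s_j>1/2$ and $\bX$ being PROPX for $j$ (so $\max\{c_j(f),c_j(h)\}\leq s_j$), I would show that either $\bZ$ qualifies, or the roles of $f$ and $h$ can be exchanged. Failing that, I would show that the failure of both options contradicts the Pareto optimality of $\bX$ within PROPX allocations or of $\bY$ in the full set, by exhibiting an explicit dominating allocation among the eight possible allocations.

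If this direct argument stalls, my fallback is a brute-force case analysis over the eight allocations, ordered by agent $i$'s bundle. For the instance at hand, this would verify that some weighted PROPX allocation is undominated.
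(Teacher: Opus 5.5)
There is a genuine gap, and it sits in the swap case you flagged as the main obstacle. Your strategy relies on the claim that an allocation that is Pareto optimal \emph{within} the weighted PROPX allocations is Pareto optimal outright, i.e., that any dominator $\bY$ can be replaced by a weighted PROPX dominator. This claim is false.

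Take $c_1(e)=11$, $c_1(f)=8$, $c_1(h)=1$, $c_2(e)=8$, $c_2(f)=c_2(h)=6$, and shares $(s_1,s_2)=(7/20,13/20)$. Then $s_1c_1(M)=7$ and $s_2c_2(M)=13$. Agent~1 can hold only $\emptyset$ or a singleton: $c_1(e)=11>7$ and $c_1(f)=8>7$ rule out every pair, and $19>7$ rules out $M$. The allocation $(\emptyset,M)$ fails for agent~2 since $20-6=14>13$. So the weighted PROPX allocations are exactly $(\{e\},\{f,h\})$, $(\{f\},\{e,h\})$ and $(\{h\},\{e,f\})$, with cost vectors $(11,12)$, $(8,14)$ and $(1,14)$.

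Hence $\bX=(\{e\},\{f,h\})$ is Pareto optimal within the weighted PROPX set. Its only Pareto dominator in the full set is the swap $\bY=(\{f,h\},\{e\})$ with costs $(9,8)$. This $\bY$ violates weighted PROPX for agent~1 and is itself Pareto optimal. Both facts you planned to contradict therefore hold, and no ``explicit dominating allocation among the eight'' exists.

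Concretely, $c_2(e)>\max\{c_2(f),c_2(h)\}$ here, so neither $(\{h\},\{e,f\})$ nor $(\{f\},\{e,h\})$ dominates $\bX$. Moreover, $(\emptyset,M)$ can never dominate $\bX$, because agent $j$'s cost strictly rises. The brute-force fallback is not a proof either, since the costs are arbitrary reals.

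The missing idea is a different extremal choice of the starting allocation. You already have the key computation: in the swap case $s_j>1/2$, so $c_j(e)\le c_j(M)/2<s_jc_j(M)$. Hence giving agent $i$ either chore of $\{f,h\}$ and agent $j$ the other two is weighted PROPX, and agent $i$'s cost falls strictly below $c_i(e)$. However, such an allocation need not Pareto dominate $\bX$, so Pareto optimality within PROPX cannot exploit it.

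The paper's selection rule is as follows. First dispose of the concentrated allocations, which are automatically PO. Then fix agent~1 and, among the weighted PROPX allocations $(\{e\},M\setminus\{e\})$, choose one minimizing $c_1(e)$. Follow a dominance chain from it to a PO allocation $\bZ$, and show that $\bZ$ is weighted PROPX. If no allocation of this form is weighted PROPX, interchange the agents.

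Your case analysis then settles every case except the swap. In the swap case, the observation above contradicts the minimality of $c_1(e)$. In the example, this rule selects $(\{h\},\{e,f\})$, which is indeed weighted PROPX and PO.
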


\begin{proof}
The case $m\leq2$ follows from Proposition~\ref{prop:m-at-most-n}; hence let $M=\{a,b,c\}$. For each $e\in M$, let $A_e=(\{e\},M\setminus\{e\})$ and $B_e=(M\setminus\{e\},\{e\})$.
	
{An allocation that assigns all three chores to one agent is PO: any allocation that Pareto dominates it must leave the agent with the empty bundle empty and therefore coincide with the original allocation.} If either concentrated allocation is weighted PROPX, the claim follows. Otherwise, the existence of weighted PROPX allocations~\cite{DBLP:conf/www/0037L022} implies that at least one of $A_a,A_b,A_c,B_a,B_b,B_c$ is weighted PROPX.
	
{
Suppose some $A_e$ is weighted PROPX, and choose $A_e$ to minimize $c_1(e)$ among all weighted PROPX allocations of type $A$. If $A_e$ is PO, the claim follows. Otherwise, starting from $A_e$, repeatedly replace the current allocation by one that Pareto dominates it. Since the set of allocations is finite, this process terminates at a PO allocation $\bZ$. By transitivity, $\bZ$ Pareto dominates $A_e$.

Neither concentrated allocation Pareto dominates $A_e$: assigning all chores to agent~$1$ strictly raises agent~$1$'s cost, while assigning all chores to agent~$2$ strictly raises agent~$2$'s cost. Hence $\bZ=A_f$ or $\bZ=B_f$ for some $f\in M$.

First suppose $\bZ=A_f$. Necessarily $f\neq e$, since an allocation does not Pareto dominate itself. Pareto domination gives $c_1(f)\leq c_1(e)$ and $c_2(M\setminus\{f\})\leq c_2(M\setminus\{e\})$. By additivity, the latter inequality is equivalent to $c_2(e)\leq c_2(f)$. Let $\ell$ denote the third chore. Since agent~$2$ receives $\{f,\ell\}$ under $A_e$, weighted PROPX gives $\max\{c_2(f),c_2(\ell)\}\leq s_2c_2(M)$. Together with $c_2(e)\leq c_2(f)$, this shows that both chores in agent~$2$'s bundle $\{e,\ell\}$ under $A_f$ cost at most $s_2c_2(M)$. Thus $A_f=\bZ$ is weighted PROPX and PO.

It remains to suppose $\bZ=B_f$. If $f\neq e$, agent~$1$ receives $e$ and another positive-cost chore, so $c_1(M\setminus\{f\})>c_1(e)$, contradicting the fact that $B_f$ Pareto dominates $A_e$. Therefore $f=e$ and $\bZ=B_e$.
Suppose, for contradiction, that $B_e$ is not weighted PROPX. Agent~$2$ receives a singleton, so the violation is by agent~$1$. Let $M\setminus\{e\}=\{h,\ell\}$, where $c_1(h)>s_1c_1(M)$. Since $B_e$ Pareto dominates $A_e$, $c_1(M\setminus\{e\})\leq c_1(e)$ and $c_2(e)\leq c_2(M\setminus\{e\})$.
We claim that $s_2\geq1/2$. Otherwise $s_1>1/2$, and $c_1(h)>s_1c_1(M)>c_1(M)/2$. But $c_1(M\setminus\{e\})\leq c_1(e)$ implies $c_1(M\setminus\{e\})\leq c_1(M)/2$, while strict positivity gives $c_1(h)<c_1(M\setminus\{e\})$, a contradiction.
It follows that $c_2(e)\leq c_2(M)/2\leq s_2c_2(M)$. Since $A_e$ is weighted PROPX and agent~$2$ receives $\{h,\ell\}$ under $A_e$, we also have $c_2(\ell)\leq s_2c_2(M)$. Let $A_h=(\{h\},\{e,\ell\})$. Agent~$1$ receives a singleton, and both chores assigned to agent~$2$ cost at most $s_2c_2(M)$. Hence $A_h$ is weighted PROPX. Moreover,
\[
c_1(h)<c_1(h)+c_1(\ell)=c_1(M\setminus\{e\})\leq c_1(e),
\]
contradicting the choice of $A_e$. Consequently, $B_e=\bZ$ is weighted PROPX and PO.
}

If no allocation of type $A$ is weighted PROPX, some allocation of type $B$ is weighted PROPX. Interchanging the agents gives the same argument.
\end{proof}

Thus the counterexample in Theorem~\ref{thm:two-agent-weighted-counterexample} uses the minimum possible number of chores for two agents.

\section{Conclusion}

{
We consider the compatibility between weighted PROPX and PO in the allocation of indivisible chores under strictly positive additive chore costs. 
For every $n\geq2$, there is an $n$-agent weighted instance in which the two requirements are incompatible. For every $n\geq3$, incompatibility occurs with $n+1$ chores and shares arbitrarily close to equal. These bounds are sharp in the number of chores: for every $n\geq3$, no counterexample exists with at most $n$ chores, while the two-agent counterexample requires four chores. The equal-share case remains open for $n\geq3$ and $m>n$.}

\section*{Generative AI disclosure}

During the development of this work, the authors interactively used
OpenAI's ChatGPT (GPT-5.6, ``Sol'').  The central counter examples
was discovered by the system, and
they were further refined and simplified. The human authors'
contribution included asking the relevant questions, refining the arguments, imposing
coherence, and working on the writing.  All mathematical statements and proofs 
were verified by the human authors, who take full
responsibility for their correctness, originality, and presentation.







\bibliographystyle{abbrv}


%
%
%
\end{document}